\newtheorem{thrm}{Theorem}
\newtheorem{prps}[thrm]{Proposition}
\theoremstyle{definition}
\newtheorem{rmrk}{Remark}
\DeclareMathOperator{\End}{End}
\DeclareMathOperator{\tr}{tr}
\DeclareMathOperator{\Tr}{Tr}
\DeclareMathOperator{\vol}{vol}
\DeclareMathOperator{\pf}{pf}
\DeclareMathAlphabet{\mathpzc}{OT1}{pzc}{m}{it}
\newcommand{\rmi}{\mathrm{i}}
\newcommand{\rmd}{\mathrm{d}}
\newcommand{\symped}[1]{\accentset{S}{#1}}
\newcommand{\sympman}{\mathcal{M}}
\newcommand{\bund}{\mathcal{E}}
\newcommand{\ebund}{{\End(\bund)}}
\newcommand{\connsymp}{\partial^{S}}
\newcommand{\connbund}{\partial^{\bund}}
\newcommand{\gambund}{\Gamma^{\bund}}
\newcommand{\curvbund}{R^{\bund}}
\newcommand{\curvsymp}{\symped{R}}
\newcommand{\sGamma}{\symped{\Gamma}}
\newcommand{\sympvol}{\vol_{S}}
\newcommand{\metvol}{\vol_{M}}
\newcommand{\slgr}{\mu}
\newcommand{\clgr}{\tau}
\newcommand{\wlgr}{\sigma}
\newcommand{\cslgr}{{\tau_\sharp}}
\newcommand{\shpv}{{V_\sharp}}
\begin{document}

\title{\textsc{Background independent noncommutative gravity from Fedosov quantization \\of endomorphism bundle}}
\date{}
\author{Micha{\l} Dobrski\footnote{michal.dobrski@p.lodz.pl}
\\
\small
\emph{Institute of Physics}
\\
\small
\emph{Lodz University of Technology}
\\
\small
\emph{W\'olcza\'nska 219, 90-924 {\L}\'od\'z, Poland}}
\maketitle
\abstract{
Model of noncommutative gravity is constructed by means of Fedosov deformation quantization of endomorphism bundle. The fields describing noncommutativity -- symplectic form and symplectic connection -- are dynamical, and the resulting theory is coordinate covariant and background independent. Its interpretation in terms of Seiberg-Witten map is provided. Also, new action for ordinary (commutative) general relativity is given, which in the present context appears as a commutative limit of noncommutative theory.
}

\section{Introduction}
Starting from the seminal paper by Seiberg and Witten \cite{seibwitt}, the great interest in noncommutative field theories with noncommutativity described by star products developed. Various theories of noncommutative general relativity (NCGR) have been studied in such context, and \cite{chamseddine1} - \cite{dimitri} provides far incomplete, although quite representative list of relevant works. With time, a tendency to make NCGR models more ``general relativistic'' can be observed, manifesting e.g. in efforts to restore its full coordinate covariance. The present paper stays within this trend and aims at introducing a variant of background independent geometric NCGR. This objective indicates that structures describing noncommutativity should be dynamical. In turn, one cannot persist in the simplest description by means of Moyal product, as it fixes constant noncommutativity tensor and breaks coordinate covariance. Although Moyal-based formulation can be geometrized \cite{aschieri1,aschieri2,aschieri3}, and noncommutativity can be given dynamics in such framework \cite{aschieri4}, rather different approach is going to be pursued here. Following the previous work \cite{dobrski-ncgr}, we choose Fedosov formalism as a main tool. 

The deformation quantization construction as given by Fedosov is considered to be ``very geometric'' due to formulation in the language of classical differential geometry -- all crucial steps are performed by analyzing connections and sections of certain bundles and the resulting structures are fully coordinate covariant. The key point of the approach introduced in \cite{dobrski-ncgr} is that it uses extension of Fedosov quantization to products of sections of appropriate endomorphism bundles. The obtained models enjoy some desirable properties -- they are fully geometric and global. It can be furthermore shown that they are locally equivalent to theories build by means of the variant of Seiberg-Witten map \cite{dobrski-sw,dobrski-ncgr}, hence the relation to more conventional strategies can be established. 

Yet, some difficulties with procedure proposed in \cite{dobrski-ncgr} can be easily spotted, and they essentially boil down to the presence of two, so far unrelated, geometries -- the metric and the symplectic ones. For example, when passing from commutative to noncommutative case, one replaces integral in the action by Fedosov trace functional. However, geometric action principles are usually  formulated by integrals involving metric volume form $\sqrt{|\det (g_{ab})|} \rmd x^1 \wedge \dots \wedge \rmd x^{2n}$, while the Fedosov trace functional is defined in terms of the symplectic one $\frac{\omega^n}{n!}$. Leaving metric and symplectic structures unrelated, leads to certain unnaturalness manifesting  in ``manual'' rescaling of the volume form, as in \cite{dobrski-ncgr}. On the other hand, it is not trivial to give physically plausible relationship between aforementioned structures. For instance, too strict dependence would result in unacceptable restrictions on the space-time geometry (compare remark \ref{strongrmrk} below). In the next section, we propose compatibility condition which seems to work reasonably in field-theoretic context. Using it, we are able to give new variational principle for general relativity, which appears to be suitably tailored for passing to noncommutative regime. In section \ref{sect3} a model of background independent, coordinate covariant, vacuum NCGR with dynamical noncommutativity is formulated and its interpretation in terms of Seiberg-Witten map is described. Perspectives on further research and other comments are provided in section \ref{sect4}. Some bulky formulas are moved to the closing appendix.

\section{Weak compatibility and related GR action}
\label{sect2}
We are going to investigate new variational principle yielding field equations of general relativity. From the purely classical perspective its form may seem rather unnatural, as we are introducing constrained system with new fields  peculiarly coupled to the metric structure. However, these additional degrees of freedom -- symplectic form and symplectic connection -- provide the input data required for the description of noncommutativity by the deformation quantization procedure. From this point of view, our action is dictated by noncommutativity of the spacetime, or more precisely (and more modestly) -- by its description by means of Fedosov construction.

The geometry which is going to be considered can be briefly described as that of \emph{weakly compatible metric Fedosov manifold}. The term ``Fedosov manifold'' refers to a $2n$ dimensional ($n>1$ in our case) symplectic manifold $(\sympman,\omega)$ equipped with a torsionfree symplectic connection $\connsymp$.  (For a discussion of geometry of Fedosov manifolds one may consult \cite{gelfand,bielgutt}). It moreover carries a~(pseudo-)Riemmaninan metric $g$. From now on we use the Latin indices as corresponding to some arbitrary coordinate basis of the tangent bundle. In general it is not assumed that theses coordinates are of the Darboux type, unless otherwise stated. The convention for connection coefficients $\tensor{\sGamma}{^i_{jk}}$ of the symplectic connection $\connsymp$ is settled by
\begin{equation}
\label{afcondef}
\connsymp_i X^j = \frac{\partial X^j}{\partial x^i}+ \tensor{\sGamma}{^j_{mi}} X^m
\end{equation}
where $X^j$ stands for components of tangent vector field. The components of the curvature tensor of $\connsymp$ are given by
\begin{equation}
\tensor{\curvsymp}{^i_j_k_l}=\frac{\partial \tensor{\sGamma}{^i_{jl}}}{\partial x^k} - \frac{\partial \tensor{\sGamma}{^i_{jk}}}{\partial x^l} + \tensor{\sGamma}{^i_{mk}} \tensor{\sGamma}{^m_{jl}} - \tensor{\sGamma}{^i_{ml}} \tensor{\sGamma}{^m_{jk}}
\end{equation}
The analogous relations hold for connection coefficients $\tensor{\Gamma}{^i_{jk}}$ of Levi-Civita connection $\nabla$ of the metric $g$, and for the Riemann curvature tensor $\tensor{R}{^i_{jkl}}$. The following condition is imposed to establish the relationship between metric and symplectic structures.
Define the tensor\footnote{Recall that difference of connection coefficients truly defines a tensor with proper transformation properties.}
\begin{equation}
\tensor{Q}{^i_{jk}}=\tensor{\Gamma}{^i_{jk}} -\tensor{\sGamma}{^i_{jk}}
\end{equation}
We are going to call $\connsymp$ and $\nabla$ \emph{weakly compatible} if $\tensor{Q}{^j_{ji}}=0$. There are some important consequences of this requirement.
\begin{thrm}
\label{detfactthrm}
The condition $\tensor{Q}{^j_{ji}}=0$ is locally equivalent to 
\begin{equation}
\label{frstdetsprop}
\det(\omega_{ab})=e^C |\det(g_{ab})|
\end{equation}
with $C \in \mathbb{R}$. 
\end{thrm}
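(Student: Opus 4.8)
The plan is to show that the single scalar combination $\tensor{Q}{^j_{ji}}$ is, up to a constant factor, the logarithmic derivative of the density ratio $\det(\omega_{ab})/|\det(g_{ab})|$, after which the claimed equivalence is immediate. The two ingredients are the classical identities expressing the contracted connection coefficients of $\nabla$ and of $\connsymp$ through the respective determinants; this also makes transparent the interpretation of weak compatibility as the metric and symplectic volume forms differing by a constant factor.

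First I would recall the standard fact $\tensor{\Gamma}{^j_{ji}} = \tfrac12\,\partial_i\log|\det(g_{ab})|$, obtained by contracting metricity $\nabla_k g_{ij}=0$ with $g^{ij}$ and invoking the matrix identity $\partial_i\log|\det M| = \tr(M^{-1}\partial_i M)$. The same argument applies to $\connsymp$: since a symplectic connection satisfies $\connsymp_k\omega_{ij}=0$, one has $\partial_k\omega_{ij} = \tensor{\sGamma}{^l_{ik}}\omega_{lj} + \tensor{\sGamma}{^l_{jk}}\omega_{il}$, and contracting with $\omega^{ij}$ — where the antisymmetry of both $\omega_{ij}$ and $\omega^{ij}$ makes the two terms equal after relabelling — yields $\omega^{ij}\partial_k\omega_{ij} = 2\,\tensor{\sGamma}{^j_{jk}}$, i.e. $\tensor{\sGamma}{^j_{ji}} = \tfrac12\,\partial_i\log\det(\omega_{ab})$ (here $\det(\omega_{ab})>0$, so no modulus is needed). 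Subtracting the two expressions gives $\tensor{Q}{^j_{ji}} = \tfrac12\,\partial_i\log\bigl(|\det(g_{ab})|/\det(\omega_{ab})\bigr)$.

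Finally, $\tensor{Q}{^j_{ji}}=0$ for all $i$ says exactly that the smooth function $\log\bigl(\det(\omega_{ab})/|\det(g_{ab})|\bigr)$ has vanishing differential, hence is constant on each connected component; writing this constant as $C$ gives (\ref{frstdetsprop}), and conversely differentiating (\ref{frstdetsprop}) returns $\tensor{Q}{^j_{ji}}=0$. I do not anticipate a genuine obstacle; the only delicate point is the bookkeeping of index positions and signs for $\omega$ in the middle step (and keeping the factor $2$ straight), while the qualifier ``locally'' in the statement is present precisely because a function with vanishing differential need only be locally constant.
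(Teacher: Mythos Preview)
Your proposal is correct and follows essentially the same route as the paper: both arguments rest on the identity that for a connection preserving a nondegenerate $2$-tensor $m_{ab}$, the contracted coefficients equal $\tfrac12\,\partial_i\log|\det(m_{ab})|$, applied once to $g$ and once to $\omega$, after which the equivalence with \eqref{frstdetsprop} is immediate. The only cosmetic difference is that the paper states this as a single general lemma for an arbitrary preserved $2$-tensor and then specializes, whereas you carry out the metric and symplectic cases in parallel; the underlying computation and logic are identical.
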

\begin{proof}
Suppose that a connection $\partial^M$ preserves nondegenerate $2$-tensor $m_{ab}$ with inverse $M^{ab}$, i.e. $\partial^M_i m_{jk}=0$. It follows that 
\begin{equation}
\label{prep1f1}
\frac{1}{2} M^{k j} \frac{\partial m_{jk}}{\partial x^i} = \tensor{\accentset{M}{\Gamma}}{^j_{ji}}
\end{equation}
with $\tensor{\accentset{M}{\Gamma}}{^i_{jk}}$ denoting connection coefficients of $\partial^M$ defined as in (\ref{afcondef}). On the other hand the formula for partial derivatives of a determinant yields
\begin{equation}
\label{prep1f2}
M^{k j} \frac{\partial m_{jk}}{\partial x^i} = \frac{1}{\det (m_{ab})} \frac{\partial \det (m_{ab})}{ \partial x^i}
\end{equation}
Combining (\ref{prep1f1}) with (\ref{prep1f2}) one obtains
\begin{equation}
\label{gammtodet}
\tensor{\accentset{M}{\Gamma}}{^j_{ji}}=\frac{1}{2 \det (m_{ab})} \frac{\partial \det (m_{ab})}{ \partial x^i}
\end{equation}
Applying above formula to $\connsymp$ and $\nabla$ in the weak compatibility condition $\tensor{Q}{^j_{ji}}=0$ gives the equation
\begin{equation}
\label{detdiffeq}
\frac{1}{\det (\omega_{ab})} \frac{\partial \det (\omega_{ab})}{ \partial x^i}=\frac{1}{\det (g_{ab})} \frac{\partial \det (g_{ab})}{ \partial x^i}
\end{equation}
which, in turn, produces (\ref{frstdetsprop}). Conversely, differentiating (\ref{frstdetsprop}) and eliminating $e^C$ from the result one arrives at (\ref{detdiffeq}) from which $\tensor{Q}{^j_{ji}}=0$ follows by (\ref{gammtodet}).
\end{proof}
As the immediate consequence we obtain the following result.
\begin{prps}
\label{propdets}
The symplectic volume form $\sympvol=\frac{\omega^n}{n!}$ and the metric volume form $\metvol=\sqrt{|\det(g_{ab})|} \rmd x^1 \wedge \dots \wedge \rmd x^{2n}$ are proportional
\begin{equation}
\label{volsprop}
\sympvol=\alpha \metvol
\end{equation}
with constant $\alpha > 0$, on each connected component of $\sympman$.
\end{prps}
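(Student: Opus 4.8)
The plan is to derive Proposition \ref{propdets} directly from Theorem \ref{detfactthrm} by relating the determinant of $\omega_{ab}$ to the coefficient of the symplectic volume form and the determinant of $g_{ab}$ to the coefficient of the metric volume form. First I would recall the standard fact that for a $2n$-dimensional symplectic form $\omega$, in any coordinate chart one has $\frac{\omega^n}{n!} = \pf(\omega_{ab})\, \rmd x^1 \wedge \dots \wedge \rmd x^{2n}$, where $\pf$ denotes the Pfaffian, and that $\pf(\omega_{ab})^2 = \det(\omega_{ab})$, so $|\pf(\omega_{ab})| = \sqrt{\det(\omega_{ab})}$ (note $\det(\omega_{ab}) > 0$ since $\omega$ is a nondegenerate alternating form in even dimension). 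Thus $\sympvol = \pf(\omega_{ab})\, \rmd x^1 \wedge \dots \wedge \rmd x^{2n}$ while $\metvol = \sqrt{|\det(g_{ab})|}\, \rmd x^1 \wedge \dots \wedge \rmd x^{2n}$.

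Next I would invoke Theorem \ref{detfactthrm}: the weak compatibility $\tensor{Q}{^j_{ji}}=0$ gives $\det(\omega_{ab}) = e^C |\det(g_{ab})|$ locally, hence $\sqrt{\det(\omega_{ab})} = e^{C/2} \sqrt{|\det(g_{ab})|}$, and therefore $|\pf(\omega_{ab})| = e^{C/2}\sqrt{|\det(g_{ab})|}$ on the chart. Comparing coefficients, $\sympvol = \pm e^{C/2}\, \metvol$ on each chart, with the sign depending on the orientation convention for the chart relative to $\omega$; choosing charts compatible with a fixed orientation (or simply noting that $\sympvol$ and $\metvol$ are both globally defined density-type objects up to sign) one gets $\sympvol = \alpha\, \metvol$ with $\alpha = e^{C/2} > 0$ locally.

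The remaining point is the passage from a local statement to one that holds with a single constant $\alpha$ on each connected component. Here the argument is that $\alpha = e^{C/2}$ where $C$ is the constant from Theorem \ref{detfactthrm}; on overlapping charts the two local constants must agree because the ratio $\sympvol / \metvol$ is a well-defined smooth positive function on $\sympman$ independent of coordinates (both forms are top-degree and nowhere vanishing), and it is locally constant by the above, hence constant on each connected component. I expect the only mild subtlety — the ``main obstacle,'' though it is minor — to be bookkeeping the orientation/sign so that $\alpha$ comes out positive rather than merely $|\alpha|$ being determined; this is handled by observing that $\omega^n/n!$ canonically orients $\sympman$ and one measures $\metvol$ against that same orientation, making $\alpha > 0$ automatic.
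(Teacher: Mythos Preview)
Your proof is correct and follows essentially the same route as the paper: invoke Theorem \ref{detfactthrm}, use the Pfaffian identity $\frac{\omega^n}{n!}=\pf(\omega_{ab})\,\rmd x^1\wedge\dots\wedge\rmd x^{2n}=\sqrt{\det(\omega_{ab})}\,\rmd x^1\wedge\dots\wedge\rmd x^{2n}$, and read off $\alpha=e^{C/2}$. If anything, you are more explicit than the paper about the sign/orientation bookkeeping and the local-to-global passage on connected components, which the paper leaves implicit.
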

Indeed, the relation (\ref{volsprop}) holds, as the consequence of (\ref{frstdetsprop}) together with the definition and properties of the Pfaffian, i.e. due to 
\begin{equation}
\frac{\omega^n}{n!}=\pf(\omega_{ab})\rmd x^1 \wedge \dots \wedge \rmd x^{2n}= \sqrt{\det(\omega_{ab})} \rmd x^1 \wedge \dots \wedge \rmd x^{2n}
\end{equation}

\begin{prps}
\label{propdivs}
The covariant divergences of polyvectors coincide for $\connsymp$ and $\nabla$.
\begin{equation}
\label{divsprop}
\connsymp_{i_1} X^{i_1 \dots i_k} = \nabla_{i_1} X^{i_1 \dots i_k}
\end{equation}
\end{prps}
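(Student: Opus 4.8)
The plan is to reduce the identity (\ref{divsprop}) to a pointwise statement about the difference tensor $\tensor{Q}{^i_{jk}}$ and then use two structural facts: $\tensor{Q}{^i_{jk}}$ is \emph{symmetric} in its lower pair of indices, because both $\connsymp$ and $\nabla$ are torsionfree; and $\tensor{Q}{^j_{ji}}=0$, which is exactly the weak compatibility hypothesis. The \emph{total} antisymmetry of the polyvector $X^{i_1\dots i_k}$ will be the other essential ingredient: it is what annihilates the ``off-trace'' contributions.

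First I would write out both covariant divergences using the convention (\ref{afcondef}). For a totally antisymmetric $X^{i_1\dots i_k}$ one has
\[
\connsymp_{i_1} X^{i_1 i_2\dots i_k} = \frac{\partial X^{i_1 i_2\dots i_k}}{\partial x^{i_1}} + \tensor{\sGamma}{^{i_1}_{m i_1}} X^{m i_2\dots i_k} + \sum_{a=2}^{k} \tensor{\sGamma}{^{i_a}_{m i_1}} X^{i_1\dots m\dots i_k},
\]
where in the sum the dummy index $m$ occupies the $a$-th slot of $X$, and the same formula holds for $\nabla$ with $\tensor{\Gamma}{}$ in place of $\tensor{\sGamma}{}$. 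Subtracting, the ordinary-derivative terms cancel and
\[
\connsymp_{i_1} X^{i_1 i_2\dots i_k} - \nabla_{i_1} X^{i_1 i_2\dots i_k} = -\tensor{Q}{^{i_1}_{m i_1}} X^{m i_2\dots i_k} - \sum_{a=2}^{k} \tensor{Q}{^{i_a}_{m i_1}} X^{i_1\dots m\dots i_k}.
\]

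It remains to check that the right-hand side vanishes termwise. In the first term the upper index of $Q$ is contracted with the divergence slot; using $\tensor{Q}{^i_{jk}}=\tensor{Q}{^i_{kj}}$ I rewrite $\tensor{Q}{^{i_1}_{m i_1}}=\tensor{Q}{^{i_1}_{i_1 m}}=\tensor{Q}{^j_{jm}}=0$ by hypothesis. In each term of the sum the coefficient $\tensor{Q}{^{i_a}_{m i_1}}$ is symmetric under $m\leftrightarrow i_1$, whereas $m$ (slot $a$) and $i_1$ (slot $1$) lie in distinct antisymmetrized positions of $X$, so $X^{i_1\dots m\dots i_k}$ is antisymmetric under $m\leftrightarrow i_1$; the contraction of a symmetric pair with an antisymmetric one is zero. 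Hence the difference vanishes and (\ref{divsprop}) follows.

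I do not anticipate a genuine obstacle: the computation is short, and the only care needed is in the index bookkeeping of the multi-term Leibniz expansion and in noting that it is \emph{total} antisymmetry of the polyvector (not skew-symmetry in a single pair) that kills the $a\ge 2$ terms — without it the statement is false. As a byproduct, the $k=1$ case gives $\connsymp_i X^i=\nabla_i X^i$, so the divergence and the attendant divergence theorem — which by Proposition \ref{propdets} may be taken against $\metvol$ or $\sympvol$ interchangeably, up to the constant $\alpha$ — are the same for both connections.
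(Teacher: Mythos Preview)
Your argument is correct and is precisely the ``straightforward calculation employing weak compatibility condition and antisymmetry of $X^{i_1 \dots i_k}$'' that the paper invokes in lieu of a written-out proof. The two ingredients you isolate --- the trace condition $\tensor{Q}{^j_{ji}}=0$ for the $a=1$ term and the symmetry of $\tensor{Q}{^i_{jk}}$ in its lower indices (from torsionfreeness of both connections) against the antisymmetry of $X$ for the $a\geq 2$ terms --- are exactly what is needed.
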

The proof is a straightforward calculation employing weak compatibility condition and antisymmetry of $X^{i_1 \dots i_k}$.

\begin{rmrk}
\label{localrmrk}
Given a metric, one can always find a weakly compatible Fedosov structure locally. Indeed, this can be achieved by taking some unimodular coordinates $|\det(g_{ab})|=1$ as Darboux coordinates of $\omega_{ij}$ for which symplectic connection coefficients are given by $\tensor{\sGamma}{^i_{jk}}=\Lambda^{il}A_{ljk}$ with arbitrary completely symmetric $A_{ljk}$ and $\Lambda^{ij}$ denoting inverese of $\omega_{ij}$. Obstructions may nevertheless appear for the global setting, e.g. topology forbidding existence of any symplectic form.   
\end{rmrk}
\begin{rmrk}
\label{strongrmrk}
One could consider the stronger compatibility condition $\tensor{Q}{^i_j_k}=0$, which means that the connections $\connsymp$ and $\nabla$ coincide. This however would be too constraining for our purposes. Indeed, in dimension $4$ the existence of $2$-form constant with respect to the Levi-Civita connection implies that the underlying metric is decomposable, i.e. it locally can be represented as a sum of two $2$-dimensional metrics \cite{stephani2003exact}. Moreover, if one demands the Ricci-flatness, then there is a very narrow class of metrics satisfying such requirement in the Lorentzian case \cite{petrov1966}. This observation prevents us from imposing such condition in the context of NCGR model building.
\end{rmrk}
We can now formulate our variational principle. Suppose that $\mathcal{L}_m$ is a matter Lagrangian, which in the standard approach is\ integrated with the canonical metric volume form and contributes to the action by $\int_\sympman \mathcal{L}_m \metvol$. Also, let $\tensor{R}{^i_{jkl}}$ denote Riemann curvature tensor of $\nabla$, $R_{ij}=\tensor{R}{^k_{ikj}}$ -- its Ricci tensor and $R=g^{ij}R_{ij}$ -- its Ricci scalar. Consider an action integral 
\begin{equation}
\label{graction}
S=\frac{1}{16 \pi G} \int_\sympman  R \frac{\omega^n}{n!} + \int_\sympman \mathcal{L}_m  \frac{\omega^n}{n!}
\end{equation}
with dynamical variables determined by weakly compatible metric Fedosov structure, i.e. by a metric $g$, a symplectic form $\omega$ and coefficients of a symplectic connection $\connsymp$ restricted by the constraints
\begin{subequations}
\label{cnstr}
\begin{align}
\label{cnstrsymp}
\connsymp_i \omega_{jk} &=0\\
\label{cnstrtorsion}
\tensor{\sGamma}{^i_{[jk]}} &=0\\
\label{cnstrcomp}
\tensor{\Gamma}{^j_{ji}} -\tensor{\sGamma}{^j_{ji}} &=0
\end{align}
\end{subequations}
which are, in the case of (\ref{cnstrsymp}) and (\ref{cnstrcomp}), nonholonomic. Observe, that from (\ref{cnstrsymp}) and vanishing of torsion expressed by (\ref{cnstrtorsion}), it follows that $\rmd \omega=0$, thus the symplectic condition is encoded in the constraints.  Furthermore the action is diffeomorphism invariant both in passive and active sense. The former symmetry follows from the coordinate covariance. The latter, from the fact that all geometric objects entering (\ref{graction}) are subject to variations and also constraints are preserved by active diffeomorphisms.

Let us use Lagrange multiplier method for finding field equations. Introduce tensors $\slgr^{ijk}$, $\tensor{\clgr}{_i^{jk}}$ and $\wlgr^i$ with symmetry properties $\slgr^{ijk}=-\slgr^{ikj}$, $\tensor{\clgr}{_i^{jk}}=-\tensor{\clgr}{_i^{kj}}$.  Consider auxiliary unconstrained action integral
\begin{equation}
S_{aux}=S + \int_\sympman \left( \slgr^{ijk}\connsymp_i \omega_{jk} + \tensor{\clgr}{_i^{jk}} \tensor{\sGamma}{^i_{[jk]}} + \wlgr^i (\tensor{\Gamma}{^j_{ji}} -\tensor{\sGamma}{^j_{ji}}) \right)  \frac{\omega^n}{n!}
\end{equation}
Obviously, variations of $S_{aux}$ with respect to Lagrange multipliers produce constraint equations (\ref{cnstr}). Variations with respect to $g_{ij}$, $\tensor{\sGamma}{^k_{ij}}$ and $\omega_{ij}$, after employing proposition \ref{propdets}, give respectively 
\begin{subequations}
\label{fldeq}
\begin{align}
\label{fldeqg}
\frac{1}{16 \pi G} R^{ij}+\frac{1}{2}g^{ij}\nabla_k \wlgr^k & =\frac{1}{2}\symped{T}^{ij}\\
\label{fldeqgam}
\tensor{\delta}{_k^i} \wlgr^j +2 \slgr^{jil}\omega_{kl} &=\tensor{\clgr}{_k^{ij}}\\
\label{fldeqom}
\frac{1}{2}\left(\frac{1}{16 \pi G} R +\mathcal{L}_m\right)\Lambda^{ji}&=\connsymp_k \slgr^{kij}
\end{align}
\end{subequations}
where $\Lambda^{ij}$ is the inverse of $\omega_{ij}$ and
\begin{equation}
\symped{T}^{ij}=2\frac{\partial{\mathcal{L}_m}}{\partial g_{ij}}-\frac{2}{\sqrt{\det(\omega_{ab})}} \frac{\partial}{\partial x^m}
\left( \sqrt{\det(\omega_{ab})} \frac{\partial \mathcal{L}_m}{\partial \left( \frac{\partial g_{ij}}{\partial x^m}\right)} \right)
\end{equation} 
can be related due to proposition \ref{propdets} to the standard energy-momentum
tensor $T^{ij}$ by
\begin{equation}
\label{serelation}
T^{ij}=\symped{T}^{ij}+g^{ij} \mathcal{L}_m
\end{equation}
On the other hand, variations with respect to matter fields entering into $\mathcal{L}_m$ yield, again with the help of proposition \ref{propdets}, the same equations of motion as that originating from the usual action for matter $\int_\sympman \mathcal{L}_m  \metvol$. Then, assuming diffeomorphism invariance of the matter action in (\ref{graction}), one recovers, by the straightforward modification of the standard calculation, the conservation of energy-momentum tensor 
\begin{equation}
\label{emconserv}
\nabla_i T^{ij}=0
\end{equation} 
Computing $\connsymp_k \wlgr^k$ from (\ref{fldeqgam}) and (\ref{fldeqom}), using proposition \ref{propdivs} and substituting the result into (\ref{fldeqg}) one obtains
\begin{equation}
\label{enstfeq}
\frac{1}{8 \pi G} \left( R^{ij} -\frac{1}{2}g^{ij} R \right)+  g^{ij} \lambda =  T^{ij}
\end{equation}
with $\lambda=\frac{\connsymp_k \tensor{\clgr}{_l^{lk}}}{2n}$. Taking $\nabla_i$ of both sides of (\ref{enstfeq}) shows that $\lambda=\mathrm{const}$ due to Bianchi identities and (\ref{emconserv}). Thus, we have proved the following result.

\begin{prps}
\label{prpsA}
Each solution of field equations (\ref{fldeq}) must include metric satisfying Einstein equations with cosmological constant (\ref{enstfeq}). 
\end{prps}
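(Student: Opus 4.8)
\emph{Proof plan.} The plan is to eliminate the Lagrange multipliers $\wlgr^i$ and $\slgr^{ijk}$ from the field equations (\ref{fldeqg})--(\ref{fldeqom}), to reduce (\ref{fldeqg}) to an equation for the metric and a single scalar $\lambda$, and then to show that $\lambda$ must be locally constant.

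First I would read off $\connsymp_k\wlgr^k$ from (\ref{fldeqgam}) and (\ref{fldeqom}). Contracting the index $i$ with $k$ in (\ref{fldeqgam}) gives $2n\,\wlgr^j + 2\,\slgr^{jkl}\omega_{kl} = \tensor{\clgr}{_k^{kj}}$. Applying $\connsymp_j$, commuting the covariant derivative through $\omega_{kl}$ by the constraint (\ref{cnstrsymp}), and substituting the divergence $\connsymp_j\slgr^{jkl}$ from (\ref{fldeqom}) --- which contributes a term proportional to $(\frac{1}{16\pi G}R+\mathcal{L}_m)$ times the inverse symplectic form --- one uses $\Lambda^{lk}\omega_{kl}=2n$ to obtain an explicit formula $\connsymp_k\wlgr^k = -(\frac{1}{16\pi G}R+\mathcal{L}_m) + \frac{1}{2n}\connsymp_k\tensor{\clgr}{_l^{lk}}$.

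Next I would replace $\connsymp_k\wlgr^k$ by $\nabla_k\wlgr^k$. This is the rank-one case of Proposition \ref{propdivs}; it is in fact immediate, because $\tensor{Q}{^j_{ji}}=0$ says exactly that the traces of the connection coefficients of $\connsymp$ and $\nabla$ coincide. Substituting into (\ref{fldeqg}) and using the relation (\ref{serelation}) between $\symped{T}^{ij}$ and the standard $T^{ij}$, the two $\mathcal{L}_m$-contributions cancel, the $g^{ij}R$ term assembles together with $R^{ij}$ into the Einstein tensor, and one is left precisely with (\ref{enstfeq}), namely $\frac{1}{8\pi G}(R^{ij}-\frac{1}{2}g^{ij}R)+g^{ij}\lambda=T^{ij}$ with $\lambda=\frac{1}{2n}\connsymp_k\tensor{\clgr}{_l^{lk}}$ --- at this stage merely some scalar function on $\sympman$.

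To finish, I would take $\nabla_i$ of (\ref{enstfeq}): the divergence of the Einstein tensor vanishes by the contracted Bianchi identity, and $\nabla_i T^{ij}=0$ by (\ref{emconserv}), which itself follows from diffeomorphism invariance of the matter part of (\ref{graction}). Hence $\nabla^j\lambda=0$, so $\lambda$ is constant on each connected component of $\sympman$, and every metric occurring in a solution of (\ref{fldeq}) satisfies Einstein's equations with cosmological constant $\lambda$. I expect the only real work to be the index bookkeeping in the first two steps --- the factors of $2n$, the contraction $\Lambda^{lk}\omega_{kl}$, and checking that no stray $\mathcal{L}_m$- or $R$-terms remain; the conceptually crucial step is the last one, since before it (\ref{enstfeq}) only asserts Einstein's equations with a variable ``cosmological function'', and it is Bianchi together with energy--momentum conservation that forces it to be constant.
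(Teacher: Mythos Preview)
Your proof is correct and follows essentially the same route as the paper: contract (\ref{fldeqgam}) and feed in (\ref{fldeqom}) to extract $\connsymp_k\wlgr^k$, convert to $\nabla_k\wlgr^k$ via Proposition~\ref{propdivs}, substitute into (\ref{fldeqg}) using (\ref{serelation}) to obtain (\ref{enstfeq}) with $\lambda=\frac{1}{2n}\connsymp_k\tensor{\clgr}{_l^{lk}}$, and then force $\lambda$ constant by Bianchi together with (\ref{emconserv}). The paper presents these steps more tersely, but the argument is identical and your bookkeeping (the $2n$ from $\Lambda^{lk}\omega_{kl}$ and the cancellation of the $\mathcal{L}_m$ terms) is right.
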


Let us check if the converse statement holds true. For this purpose it is convenient to transform equation for Lagrange multipliers in the following manner. From (\ref{enstfeq}) it can be easily calculated that 
\begin{equation}
\label{rformula}
R=\frac{8 \pi G (T-2n \lambda)}{1-n}
\end{equation}
for $T=g_{ij}T^{ij}$. Introduce $\cslgr^{ijk}:=\Lambda^{il}\tensor{\clgr}{_l^{jk}}$ and rewrite (\ref{fldeqgam}) as
\begin{subequations}
\label{lagreqs}
\begin{equation}
\label{fldeqgamsharp}
\slgr^{jik}=\frac{1}{2}\left(\,\,\cslgr^{kij}-\wlgr^j \Lambda^{ki}\right)
\end{equation}
It follows immediately from this formula that $\cslgr^{kij}=-\cslgr^{ikj}$, i.e. $\cslgr^{kij}$ is completely antisymmetric. Using relation (\ref{rformula}) we infer from (\ref{fldeqg})
\begin{equation}
\label{lagrweq}
\nabla_k \wlgr^k = \frac{\lambda}{1-n}-\mathcal{L}_m -\frac{T}{2-2n}
\end{equation}
Employing above formula in (\ref{fldeqgamsharp}) and (\ref{fldeqom}) yields, by propostion \ref{propdivs}
\begin{equation}
\label{lagrcshrpeq}
\nabla_k \cslgr^{ijk}=\connsymp_k \,\cslgr^{ijk}=\lambda \Lambda^{ij}
\end{equation}
\end{subequations}
Observe that from equations (\ref{lagreqs}) and (\ref{enstfeq}) one can return back to (\ref{fldeq}). In fact, (\ref{fldeqgam}) results trivially form (\ref{fldeqgamsharp}). After multiplying (\ref{lagrweq}) by $g^{ij}$ and using (\ref{rformula}), (\ref{enstfeq}) and (\ref{serelation}) one obtains (\ref{fldeqg}). Finally, the relation (\ref{fldeqom}) can be derived by applying $\connsymp_j$ to (\ref{fldeqgamsharp}) and utilizing (\ref{lagrcshrpeq}), (\ref{lagrweq}) together with (\ref{rformula}). Hence, equations (\ref{lagreqs}) and (\ref{enstfeq}) are equivalent to (\ref{fldeq}).

Notice that given a metric satisfying Einstein field equations (\ref{enstfeq}) one can, by remark \ref{localrmrk}, construct weakly compatible Fedosov structure locally. Then, equations (\ref{lagreqs}) can be solved, yielding Lagrange multipliers satisfying (\ref{fldeq}). Indeed equation (\ref{lagrcshrpeq}) in Darboux coordinates takes form
\begin{equation}
\frac{\partial \cslgr^{ijk}}{\partial x^k} = \lambda \Lambda^{ij}
\end{equation}
and is solved by
\begin{equation}
\cslgr^{ijk} = \frac{3 \lambda}{2(n-1)}  \Lambda^{[ij} x^{k]} 
\end{equation}
Similarly, if we put $f(x)$ for r{.}h{.}s{.} of (\ref{lagrweq}) then in Darboux coordinates it reduces to $\frac{\partial \wlgr^k}{\partial x^k}=f(x)$ and one can easily construct a solution, e.g. by choosing $\wlgr^1 = \int f(x) dx^1$ and $\wlgr^2= \dots =\wlgr^{2n}=0$. Then $\slgr^{jik}$ can be algebraically computed from (\ref{fldeqgamsharp}). The following proposition has been therefore proved.

\begin{prps}
\label{prpsB}
Locally, for each solution of Einstein equations (\ref{enstfeq}) there exist weakly compatible Fedosov structure and Lagrange multipliers satisfying field equations (\ref{fldeq}).
\end{prps}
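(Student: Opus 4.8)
The plan is to turn the sketch preceding the statement into a proof: build the Fedosov structure via remark~\ref{localrmrk}, then invoke the equivalence — established above — between the full system (\ref{fldeq}) and the pair \{(\ref{lagreqs}),\,(\ref{enstfeq})\}, so that, (\ref{enstfeq}) being assumed, only (\ref{lagreqs}) has to be solved for the multipliers.

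First I would work in a neighbourhood of an arbitrary point and, following remark~\ref{localrmrk}, pick coordinates $x^a$ that are simultaneously unimodular for the given metric, $|\det(g_{ab})|=1$, and Darboux for a symplectic form $\omega_{ij}$, and set $\tensor{\sGamma}{^i_{jk}}=\Lambda^{il}A_{ljk}$ with $A_{ljk}$ completely symmetric (for concreteness $A\equiv 0$), $\Lambda^{ij}$ the inverse of $\omega_{ij}$. By remark~\ref{localrmrk} this is a weakly compatible metric Fedosov structure, so it satisfies the constraints (\ref{cnstr}) and the Lagrange-multiplier analysis applies verbatim. Because $\det(g_{ab})$ is constant in these coordinates, (\ref{gammtodet}) gives $\tensor{\Gamma}{^j_{ji}}=0$, hence $\tensor{\sGamma}{^j_{ji}}=0$ by weak compatibility; together with vanishing torsion this implies that, acting on a completely antisymmetric tensor, both $\connsymp_k$ and $\nabla_k$ coincide in these coordinates with the plain coordinate divergence $\partial_k$ (the non-trace Christoffel terms cancel against the antisymmetry, the trace term vanishes). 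This is exactly what makes the remaining equations elementary integrations.

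Next I would solve (\ref{lagreqs}) in the order dictated by their derivation. In the chosen coordinates (\ref{lagrcshrpeq}) reads $\partial_k\cslgr^{ijk}=\lambda\Lambda^{ij}$ with $\Lambda^{ij}$ constant, for which $\cslgr^{ijk}=\frac{3\lambda}{2(n-1)}\Lambda^{[ij}x^{k]}$ — well defined since $n>1$ — is a completely antisymmetric solution, as a one-line check using $\partial_k x^k=2n$ shows. Equation (\ref{lagrweq}) becomes $\partial_k\wlgr^k=f$ with $f$ its known (smooth) right-hand side, solved e.g.\ by $\wlgr^1=\int f\,\rmd x^1$, $\wlgr^2=\dots=\wlgr^{2n}=0$. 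Finally $\slgr^{jik}$ is read off algebraically from (\ref{fldeqgamsharp}) and $\tensor{\clgr}{_i^{jk}}$ by inverting the defining relation $\cslgr^{ijk}=\Lambda^{il}\tensor{\clgr}{_l^{jk}}$; one then verifies the required index symmetries, e.g.\ $\slgr^{jik}=-\slgr^{jki}$ follows from complete antisymmetry of $\cslgr$ and antisymmetry of $\Lambda$. Since (\ref{enstfeq}) holds by hypothesis (with $\lambda$ the cosmological constant appearing there), the equivalence recalled above shows that $(g,\omega,\connsymp,\slgr,\clgr,\wlgr)$ solves the full system (\ref{fldeq}), which is the claim.

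I do not expect a genuine obstacle: the statement amounts to the constraint surface being non-empty over any Einstein metric, and everything reduces to solving two inhomogeneous divergence equations with constant-coefficient operators in suitable coordinates. The only points requiring care are bookkeeping ones — checking that $\partial_k$ really represents $\connsymp_k$ and $\nabla_k$ on polyvectors in the chosen frame (so that the explicit solutions of (\ref{lagrcshrpeq}) and (\ref{lagrweq}) are admissible), and tracking the symmetry type of each multiplier. It should also be stressed that the result is purely local: the primitives defining $\cslgr$ and $\wlgr$ need not glue globally, and the symplectic form itself may be topologically obstructed, as already noted in remark~\ref{localrmrk}.
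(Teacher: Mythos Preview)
Your proposal is correct and follows essentially the same approach as the paper: construct the Fedosov structure via remark~\ref{localrmrk}, pass to Darboux/unimodular coordinates so that the divergence equations (\ref{lagrcshrpeq}) and (\ref{lagrweq}) become plain coordinate divergences, solve them by the explicit primitives $\cslgr^{ijk}=\tfrac{3\lambda}{2(n-1)}\Lambda^{[ij}x^{k]}$ and $\wlgr^1=\int f\,\rmd x^1$, and recover $\slgr$ algebraically from (\ref{fldeqgamsharp}). You merely add detail the paper leaves implicit---the reason the connection terms drop on polyvectors, the symmetry checks on the multipliers, and the explicit appeal to the equivalence $(\ref{fldeq})\Leftrightarrow\{(\ref{lagreqs}),(\ref{enstfeq})\}$.
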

Propositions \ref{prpsA} and \ref{prpsB} ensure that the theory considered in this section is locally equivalent to the classical general relativity with cosmological constant. However, one should be aware of possible global issues, as already mentioned in remark \ref{localrmrk}.

Finally, let us comment on the symplectic content of our model. Equations (\ref{cnstrsymp}) and (\ref{cnstrtorsion}) show that $\omega$ together with $\connsymp$ constitute Fedosov manifold. The only further restriction is given by weak compatibility (\ref{cnstrcomp}). As it follows from the theorem \ref{detfactthrm}, this is equivalent to coupling $g$ with $\omega$ by determinants only. Thus, all weakly compatible Fedosov structures are gauge equivalent in the present setting.

\section{Noncommutative gravity}
\label{sect3}
A model of coordinate covariant, background independent noncommutative gravity which reduces to the variant of gravity described in the previous section in the commutative limit is going to be introduced now. For sake of simplicity we confine ourselves to the vacuum case of $\mathcal{L}_m \equiv 0$. The procedure described in \cite{dobrski-ncgr} will be followed. However, there are two notable improvements as compared to \cite{dobrski-ncgr}. First, because of the form of the action (\ref{graction}) we are no longer dealing with incompatibility of the volume forms -- the Fedosov trace functional is also built with the symplectic one. Second, as we switch from a fixed symplectic form and a fixed symplectic connection used in \cite{dobrski-ncgr} to the fully dynamical setting now, the theory becomes background independent.

\subsection{Preliminaries on Fedosov construction}
Our tool for deforming gravity into noncommutative theory is the Fedosov construction of deformation quantization of endomorphism bundles. For the convenience of the reader, key facts about  this formalism are presented here briefly, with all ``internal'' details omitted. If interested in them, one should consult \cite{fedosovart, fedosovbook} for beautiful original exposition. Further studies on geometric, algebraic and formal structure of the theory, as well as some examples, can be found in e.g. \cite{emmrwein,tosiek,waldmann}.

The arena for Fedosov quantization is given exactly by a Fedosov manifold as introduced in the previous section. The primary result is  explicit (although iterative) construction of global, associative, geometric formal star product of functions on $\sympman$. For $\chi$ denoting formal parameter, some initial terms of this product read
\begin{multline}
f*_S g=fg-\frac{\rmi \chi}{2} \connsymp_{i} f \, \Lambda^{ij}  \connsymp_{j} g
-\frac{\chi^2}{8} \connsymp_{(i}\connsymp_{j)} f \, \Lambda^{ik} \Lambda^{jl} \connsymp_{(k}\connsymp_{l)} g \\
+\frac{\rmi \chi^3}{48} \left(\connsymp_{(i} \connsymp_j \connsymp_{k)} f  -\frac{1}{4} \curvsymp_{(ijk)u} \Lambda^{uv} \connsymp_v f \right) \Lambda^{ip} \Lambda^{jr} \Lambda^{ks} 
 \left(\connsymp_{(p} \connsymp_r \connsymp_{s)} g  -\frac{1}{4} \curvsymp_{(prs)t} \Lambda^{tw} \connsymp_w gs \right)\\
+O(\chi^4)
\end{multline}
where $f,g \in C^{\infty}(\sympman)[[\chi]]$, $\tensor{\curvsymp}{^i_j_k_m}$ is the curvature tensor of symplectic connection and standard notation for symmetrization is used.

Now, let $\bund$ be a vector bundle over $\sympman$ and let $\ebund$ be corresponding endomorphism bunle, i.e. each fiber $\ebund_x$ consists of endomorphisms of respective fiber $\bund_x$. $\ebund$ comes with natural product of its sections which is noncommutative from the beginning (locally, for a fixed frame in $\bund$, it is just matrix multiplication). Fedosov construction provides global, geometric, associative deformation of this product into star product. Suppose that some connection $\connbund$ in 
$\bund$ is chosen. Let $\partial$ denote connection on arbitrary tensor product of $T\sympman$, $T^* \sympman$, $\bund$ and $\bund^*$ which combines $\connsymp$ and $\connbund$ (e.g. $\partial=\connsymp \otimes 1 + 1 \otimes \connbund$ for $T\sympman \otimes \bund$). Let $\gambund_i$ be local section of $\ebund$ corresponding to connection coefficients of $\connbund$ for some local frame and coordinates (i.e. $\connbund_i = \frac{\partial}{\partial x^i}+\gambund_i$) and let $\curvbund_{ab}=\frac{\partial}{\partial x^a} \gambund_b-\frac{\partial}{\partial x^b} \gambund_a+[\gambund_a,\gambund_b]$ be the curvature of $\connbund$. Then, Fedosov star product of endomorphisms reads 
\begin{multline}
\label{fedosov_endprod}
F*G=FG-\frac{\rmi \chi}{2} \Lambda^{a b} \partial_a F \partial_b G+\\
-\frac{\chi^2}{8}\Lambda^{ab}\Lambda^{cd}\Big(\{\partial_b F,\curvbund_{ac}\}\partial_d G + \partial_b F \{\curvbund_{ac},\partial_d G\} +\partial_{(a} \partial_{c)} F \partial_{(b} \partial_{d)} G\Big) + O(\chi^3)
\end{multline}
with $F,G \in C^\infty(\ebund)[[\chi]]$ and $\{\cdot , \cdot\}$ standing for the anticommutator.

For some special cases product (\ref{fedosov_endprod}) can be expressed in terms of star product of functions. If $\connbund$ is flat and local frame with $\gambund_i \equiv 0$ is fixed, then (\ref{fedosov_endprod}) reduces to a product of matrices with commutative multiplication of entries replaced by the noncommutative Fedosov star product of functions. We shall consequently use symbol $*_S$ for such case. One can moreover demand that $\connsymp$ is also flat and fix Darboux coordinates with vanishing coefficients $\tensor{\sGamma}{^i_{jk}} \equiv 0$. Then, Fedosov star product of functions reduces to Moyal product $*_M$ and (\ref{fedosov_endprod}) becomes multiplication widely used for the description of noncommutative gauge theories.

As we have seen, the input data for the construction of a star product of endomorphisms are: a symplectic form, a symplectic connection and a connection in $\bund$. Thus, on the same symplectic manifold one may construct different star products originating in different choices of $\connsymp$ and $\connbund$. Nevertheless, it can be shown that these star products are star equivalent (i.e. isomorphic) in the following sense. Let $*_1$ and $*_2$ be two star products obtained due to altered choice of connections. Then, there exists mapping $M=1+\chi M_1 +\chi^2 M_2 + \dots$, where $M_i$ are some differential operators, such that\footnote{More strictly, there are also other, internal degrees of freedom in Fedosov theory (and we silently assigned to them some canonical values), which, when modified, can produce inequivalent star products for certain nontrivial topologies.}
\begin{equation}
M(F *_1 G)=M(F) *_2 M(G)
\end{equation} 

The key object for our approach is Fedosov trace functional which realizes noncommutative variant of integration over manifold. For each Fedosov star product $*$ one is able to construct trace $\tr_*$ defined for compactly supported elements of $C^{\infty}(\ebund)[[\chi]]$, taking values in $\mathbb{C}[[\chi]]$, satisfying
\begin{equation}
\label{trcprop}
\tr_*(F*G)=\tr_*(G*F)
\end{equation} 
 and being invariant on star equivalences
\begin{equation}
\label{trinvprop}
\tr_{*_1} F = \tr_{*_2} M(F)
\end{equation}
It turns out that these requirements define trace completely up to normalizing constant. This fact is rooted in the observation that for the Moyal product the trace is just integral
\begin{equation}
\tr_{*_M} F = \mathrm{const} \int_{\mathbb{R}^{2n}} \Tr (F) \frac{\omega^n}{n!}
\end{equation} 
where $\Tr$ is the pointwise trace of an endomorphism. For arbitrary star product trace can be computed either by considering local trivializations to Moyal product or using different methods \cite{fedosov_on_the_trace,dobrski-ncgr}. The result reads
\begin{equation}
\label{fedosov_trfun}
\tr_* (F)=\int_{\sympman} \Tr\Bigg(F + \frac{\rmi \chi }{2}\Lambda^{a b}\curvbund_{a b} F 
 +\chi^2 \bigg(-\frac{3}{8} \Lambda^{[a b} \Lambda^{c d]}\curvbund_{a b} \curvbund_{c d} +s_2 \bigg)F + O(\chi^3) \Bigg) \frac{\omega^n}{n!}
\end {equation}
with
\begin{equation*}
s_2=\frac{1}{64} \Lambda^{[a b} \Lambda^{c d]}\tensor{\curvsymp}{^k_{lab}}\tensor{\curvsymp}{^l_{kcd}}  + \frac{1}{48}\Lambda^{ab}\Lambda^{cd}\connsymp_e \connsymp_a \tensor{\curvsymp}{^e_{bcd}}
\end{equation*}

\subsection{Action and field equations}
Let us deform action (\ref{graction}) using scheme developed in \cite{dobrski-ncgr}. The main idea is to interpret Lagrangian as a pointwise trace of some endomorphism, and then to construct deformed action by replacing integral with the trace functional. (As we shall see later, this procedure indeed introduces noncommutativity of the spacetime). Hence, the bundle $\bund$ and connection $\connbund$ must be specified first. Let $\bund=T \sympman$ and $\connbund=\nabla$. 
Furthermore the section of endomorphism bundle $\End (T\sympman)$ is needed. From the commutative limit (\ref{graction}) one infers that pointwise trace of this endomorphism should yield Ricci scalar. Then, the obvious natural candidate is Ricci tensor with the first index raised. Thus, we define endomorphism $\underline{R} \in C^{\infty}(\End (T\sympman))$ by setting its local components to $\tensor{R}{^i_j}$, or to state the same differently, by defining global action of $\underline{R}$ on tangent vector field $X \in C^\infty (T\sympman)$ with the local formula $(\underline{R} X)^i=\tensor{R}{^i_j}X^j$.
The noncommutative action which deforms action (\ref{graction}) reads 
\begin{equation}
\label{ncactn}
\widehat{S}=\frac{1}{16 \pi G} \tr_* (\underline{R}) 
\end{equation}
and is taken together with constraints (\ref{cnstr}). Notice that (\ref{cnstrsymp}) and (\ref{cnstrtorsion}) are conditions which ensure consistency of Fedosov quantization. In this way the variational procedure can be interpreted now as ``variation over Fedosov deformation quantizations''. The action (\ref{ncactn}) is diffeomormphism invariant in the very same way as (\ref{graction}), since we neither break coordinate covariance, nor introduce new fields in (\ref{ncactn}). For sake of further analysis let us break (\ref{ncactn}) into two parts
\begin{equation}
\widehat{S}=S + S_{c}
\end{equation}
where $S$ is exactly the action of (\ref{graction}) with $\mathcal{L}_m \equiv 0$ and $S_c$ corresponds to noncommutative corrections (compare with appendix for specific formula up to $\chi^2$).

Again, an auxiliary unconstrained action $\widehat{S}_{aux}$ with same Lagrange multipliers can be introduced. The field equations following from it take form 
\begin{subequations}
\label{ncfldeq}
\begin{align}
\label{ncfldeqg}
\frac{1}{16 \pi G} R^{ij}+\frac{1}{2}g^{ij}\nabla_k \wlgr^k & =W^{ij}\\
\label{ncfldeqgam}
\tensor{\delta}{_k^i} \wlgr^j +2 \slgr^{jil}\omega_{kl} +\tensor{V}{_k^{ij}}&=\tensor{\clgr}{_k^{ij}}\\
\label{ncfldeqom}
\frac{1}{32 \pi G} R \Lambda^{ji}+U^{ij}&=\connsymp_k \slgr^{kij}
\end{align}
\end{subequations}
and are supplemented by constraints (\ref{cnstr}). Here, $W^{ij}$, $\tensor{V}{_k^{ij}}$ and $U^{ij}$ are tensors being power series in deformation parameter $\chi$, whose first terms of $\chi^2$ order are given in the appendix. $W^{ij}$ is symmetric, while $U^{ij}$ is antisymmetric, because they are produced by variations of $g$ and $\omega$ respectively. 

Let us analyze strategy for solving Lagrange multipliers in the field equations. Essentially, one may proceed as in the commutative case. However, the nontrivial integrability conditions appear now, yielding some new relations between dynamical variables. First, (\ref{ncfldeqgam}) and (\ref{ncfldeqom}) can be used to determine $\connsymp_k \wlgr^k$. Defining
\begin{equation}
p(x):=\frac{1}{2n}\Big(\connsymp_j \tensor{\clgr}{_k^k^j}-2\omega_{kl}U^{kl}-\connsymp_j \tensor{V}{_k^k^j}\Big)
\end{equation}
one obtains 
\begin{equation}
\label{ncwlgreq}
\connsymp_k \wlgr^k=p(x)-\frac{1}{16 \pi G}R
\end{equation}
and it is possible to rewrite (\ref{ncfldeqg}) as
\begin{subequations}
\label{ncfldeqs}
\begin{equation}
\label{ncensteq}
\frac{1}{8 \pi G} \Bigg( R^{ij}-\frac{1}{2}g^{ij} R \Bigg)+g^{ij} p(x) = 2 W^{ij}
\end{equation}
Then, by contracted Bianchi identities we get linear inhomogeneous equation
\begin{equation}
\label{peq}
\nabla_k p(x)=2 g_{jk} \nabla_i W^{ij}
\end{equation}
for $p(x)$. Its general solution can be written as $p(x)=\lambda+p_0(x)$ where $\lambda$ is a (cosmological) constant and $p_0 (x)$ is arbitrary special solution of (\ref{peq}), which without loss of generality can be assumed to be of $\chi^2$ order, as $W^{ij}$ is. The integrability condition for (\ref{peq}) can be easily obtained and it reads
\begin{equation}
\label{peqcomp}
g_{jk}\nabla_m \nabla_i W^{ij}= g_{jm}\nabla_k \nabla_i W^{ij}
\end{equation}
\end{subequations}
Remarkably, it depends on $g$, $\omega$ and $\connsymp$ only. Defining $\shpv^{kij}:=\Lambda^{km}\tensor{V}{_m^i^j}$ one can rewrite (\ref{ncfldeqgam}) as
\begin{equation}
\label{ncgensharp}
\slgr^{jik}=\frac{1}{2}\left(\cslgr^{kij}-\wlgr^j \Lambda^{ki}- \shpv^{kij}\right)
\end{equation}
Notice, that  antisymmetry of $\cslgr^{kij}$ in first two indices cannot be inferred this time, instead one observes that
\begin{equation}
\label{nctshrp}
\cslgr^{(ki)j}=\shpv^{(ki)j}
\end{equation} 
Using (\ref{ncgensharp}) in equation (\ref{ncfldeqom}) and applying formula for $\connsymp_k \wlgr^k$ as before, one obtains
\begin{equation}
\label{nclagrcshrpeq}
\connsymp_k \cslgr^{ijk}=p(x) \Lambda^{ij} - 2U^{ij}+\connsymp_k \shpv^{ijk}
\end{equation}
and this equation requires some integrability check. Indeed, calculation of $\connsymp_j \connsymp_k \cslgr^{ijk}$ from (\ref{nclagrcshrpeq}), use of antisymmetry of $\cslgr^{ijk}$ in last two indices, and symmetry of symplectic Ricci tensor yields relation
\begin{equation}
\label{nclagrcshrpeqcomp}
\tensor{\curvsymp}{^i_j_k_m} \cslgr^{jkm}=2 \connsymp_j \Big( p(x) \Lambda^{ij} - 2U^{ij}+\connsymp_k \shpv^{ijk} \Big)
\end{equation}
which appears as a nontrivial condition\footnote{This relation is satisfied trivially in the commutative case due to the full antisymmetry of $\cslgr^{jkm}$ and constancy of $p(x)$.}. However, one can proceed as follows. Using diffeomorphism invariance of $S_c$ it is possible to derive the following identity\footnote{We omit detailed derivation of (\ref{ncconsq}) as it is straightforward analogue of the standard textbook derivation of $\nabla_i T^{ij}=0$ from the diffeomorphism invariance of the action for matter fields.}
\begin{equation}
\label{ncconsq}
2 g_{im} \nabla_j W^{ij}+2 \omega_{m j} \connsymp_i U^{ij} + \connsymp_i \connsymp_j \tensor{V}{_m^{ij}} - \tensor{\curvsymp}{^i_{jkm}} \tensor{V}{_i^{jk}}=0
\end{equation}
Employing (\ref{ncconsq}) together with (\ref{peq}) in (\ref{nclagrcshrpeqcomp}) simplifies it to
\begin{equation}
\label{intchkst2}
\tensor{\curvsymp}{^i_j_k_m} \cslgr^{jkm}=-2 \Lambda^{il} \omega_{js} \tensor{\curvsymp}{^s_{kml}} \shpv^{jkm}
\end{equation}
Now it is convenient to introduce $\curvsymp^{\flat}_{ijkl}:=\omega_{im} \tensor{\curvsymp}{^m_{jkl}}$. Condition (\ref{intchkst2}) can be  then rewritten as 
\begin{equation}
\label{intchkst3}
\curvsymp^{\flat}_{ijkm} \cslgr^{jkm}=-2  \curvsymp^{\flat}_{jkmi} \shpv^{jkm}
\end{equation}
On the other hand, the following symmetry for $\curvsymp^{\flat}_{ijkl}$ holds\footnote{Compare \cite{gelfand} for systamtic study of the structure of $\curvsymp^{\flat}_{ijkl}$. To obtain (\ref{cfsym}) one may use first Bianchi identities together with $\curvsymp^{\flat}_{ijkl}+\curvsymp^{\flat}_{lijk}+\curvsymp^{\flat}_{klij}+\curvsymp^{\flat}_{jkli}=0$. }
\begin{equation}
\label{cfsym}
\curvsymp^{\flat}_{ijkm}=\curvsymp^{\flat}_{jmki}-\curvsymp^{\flat}_{jkmi}
\end{equation}
and consequently, due to $\cslgr^{jkm}=-\cslgr^{jmk}$
\begin{equation}
\label{intchkst4}
\curvsymp^{\flat}_{jkmi} \cslgr^{jkm}=\curvsymp^{\flat}_{jkmi} \shpv^{jkm}
\end{equation}
But above equation is trivially satisfied because of $\curvsymp^{\flat}_{jkmi}=\curvsymp^{\flat}_{kjmi}$ and (\ref{nctshrp}). Thus, condition (\ref{nclagrcshrpeqcomp}) does not pose an obstruction for integrability of (\ref{nclagrcshrpeq}).

Above analysis provide a strategy for elimination of Lagrange multipliers from the field equations. Indeed, one starts with pointing out a solution of (\ref{ncfldeqs}) for $g$, $\omega$, $\connsymp$ and $p(x)$. Then $\cslgr^{ijk}$ can be determined from (\ref{nclagrcshrpeq}), and consequently $\wlgr^i$ and $\slgr^{ijk}$ can be obtained form (\ref{ncwlgreq}) and (\ref{ncgensharp}) respectively.
For now it is not clear to what extent equations (\ref{ncfldeqs}) fix symplectic data. Hopefully further perturbative expansion of the field equations could provide some conclusive results.

It is moreover substantial to verify if the considered theory contains imaginary terms in the action and in the field equations. The reality of action integral $\widehat{S}$ can be analyzed using method developed in \cite{dobrski-ncgrreal}. In fact, it was already checked there that endomorphism $\underline{R}$ is self-adjoint\footnote{Strictly, the self-adjointness was verified for the endomorphism proportional to $\underline{R}$ by the real scalar function. This of course implies the same property for $\underline{R}$.} with respect to the involution defined by the extension of metric $g$ to Hermitian metric on the complexification of $T\sympman$. Due to properties of the trace functional (compare theorem 1 in \cite{dobrski-ncgrreal}) one immediately obtains reality of our action integral, and in turn, of the field equations. Consequently imaginary terms indeed cannot appear in the theory. In particular, this result explains the vanishing of the first order correction, as the first order term in the general formula (\ref{fedosov_trfun}) for the trace is imaginary. Such observation is of course fully consistent with the well known phenomenon of lack of first order terms in NCGR theories obtained by the direct application of Seiberg-Witten map (compare e.g. \cite{mukherjee}).

\subsection{Seiberg-Witten map}
Let us investigate local noncommutative gauge symmetry of the proposed theory by interpreting it in terms of Seiberg-Witten map. This would also clarify the  relation of the present model to noncommutativity of the spacetime, as well as reveal the relation to more conventional models of NCGR that use Seiberg-Witten map as a main tool. With some modifications, reasoning from \cite{dobrski-ncgr} can be repeated\footnote{The most important difference is that since symplectic data are dynamical now, we cannot, at the level of action, assume that symplectic connection is flat, even in some special case.}. 

Within Fedosov formalism the Seiberg-Witten map turns out to be a local property of quantization of endomorphism bundle \cite{dobrski-sw}. More precisely, it appears as a relation between local star equivalences which trivialize star product of endomorphisms to star product of functions. Suppose that locally, in some frame $e$ in $\bund=T\sympman$, instead of $\nabla_i$ the flat connection is considered with vanishing connection coefficients. It gives rise to a star product of endomorphisms which, in this particular frame, is realized by the standard row-column rule of matrix multiplication, but with the usual product of functions replaced by Fedosov star product of functions $*_S$. This star product and our initial star product $*$ (the one relevant to the trace used in the action (\ref{ncactn})) are locally isomorphic. Let $M$ denote corresponding trivialization isomorphism. Notice, that the construction of $M$ is frame dependent due to the requirement of vanishing connection coefficients. It turns out that  $M'$ obtained for arbitrary different frame $e'=e \mathfrak{g}^{-1}$ is related to $M$ by Seiberg-Witten gauge transformation 
\begin{equation}
\label{sw-gauge}
M'(A_{(e')})=\widehat{\mathfrak{g}}*_S M(A_{(e)}) *_S \widehat{\mathfrak{g}}^{-1}
\end{equation}
where $A_{(e)}$ and $A_{(e')}$ are matrices corresponding to endomorphism $A$ in frames $e$ and $e'$ respectively, and hat on $\mathfrak{g}$ denotes Seiberg-Witten map obtained from Fedosov formalism (compare \cite{dobrski-sw} and also \cite{dobrski-genfed} for detailed analysis of such maps). Hence, trivialization isomorphisms appear as a Seiberg-Witten maps of endomorphisms, and Seiberg-Witten gauge transformations turn out to relate these isomorphisms. It should be stressed that such Seiberg-Witten maps are neither assumed nor appear as a solution to some postulated equation. They could be rather systematically computed form the very structure of Fedosov construction. This situation can be schematically visualized on the diagram. 
\begin{figure}[h]
\includegraphics[scale=0.8]{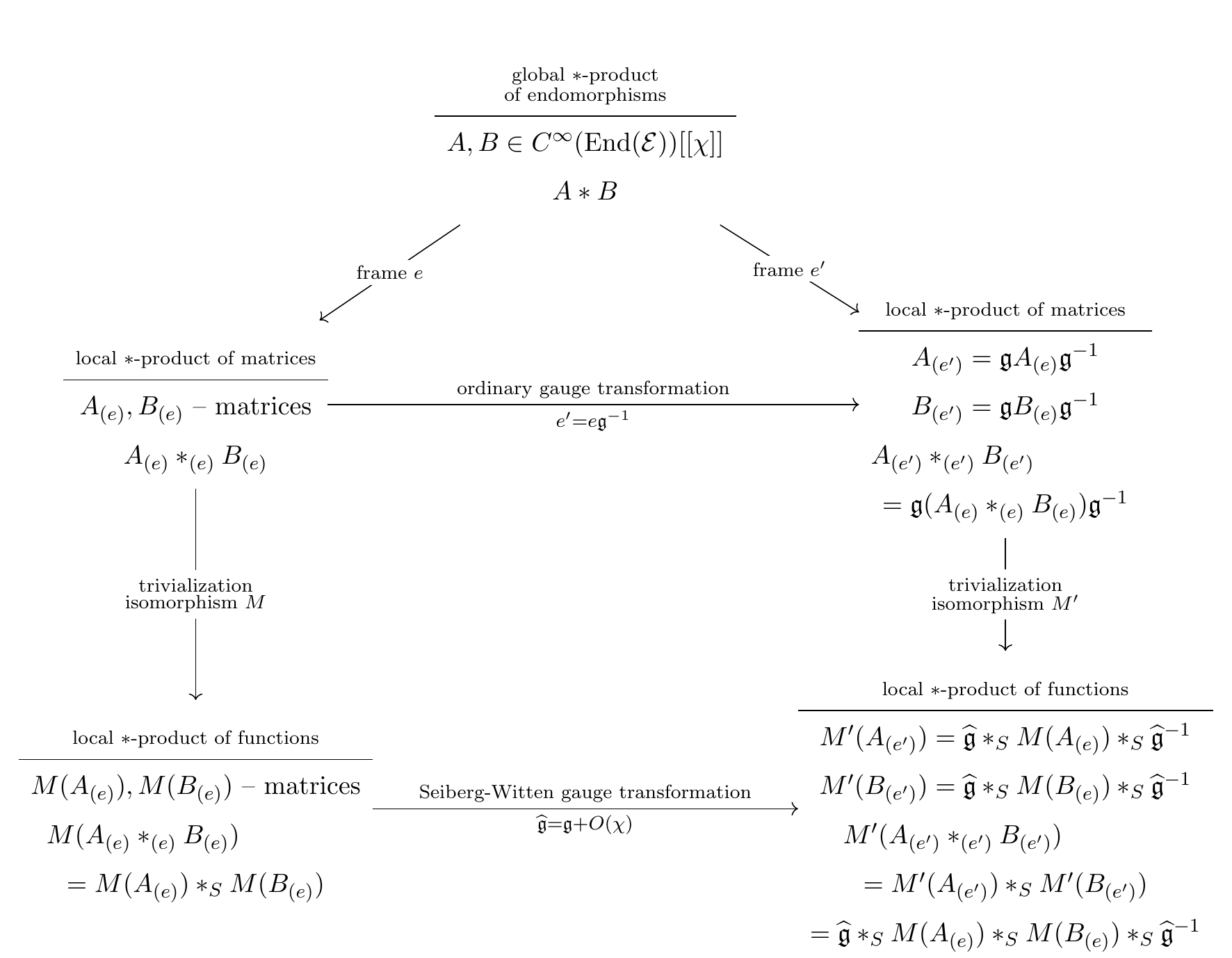}
\centering
\end{figure}

In the present case, if we assume that endomorphism $\underline{R}$ has the support small enough to be covered by a single frame, then, in virtue of (\ref{trinvprop}), one can rephrase (\ref{ncactn}) as
\begin{equation}
\label{act-local}
\widehat{S}=\frac{1}{16 \pi G} \tr_{*_S} (M(\underline{R}_{(e)})) 
\end{equation}
Such formulation, due to the trace property (\ref{trcprop}), is invariant to Seiberg-Witten gauge transformations (\ref{sw-gauge}) with $\mathfrak{g}$ and $\widehat{\mathfrak{g}}$ given by some local invertible sections of endomorphism bundle. Yet (\ref{sw-gauge}) is realized in terms of Fedosov star product of functions, i.e. it involves noncommutativity of the spacetime described by $*_S$. (Notice however, that in this setting large part of the dependence of the theory on the metric structure is transferred to the Seiberg-Witten map. In turn, the action (\ref{act-local}) is not manifestly coordinate covariant, although the resulting field equations can be of course brought to a covariant form equivalent to (\ref{fldeq})).

\subsection{Generalized Fedosov algebras}
Let us briefly comment on the option of using generalized variants of Fedosov construction, as described in \cite{dobrski-genfed}. The extension of original Fedosov formalism was considered there, originated in the possibility of founding the construction on a star product different from the Moyal one. (The Fedosov quantization can be viewed as a nontrivial geometrization of the Moyal quantization of a linear symplectic space). After providing results for fairly general prototypical products, the specific case of ``Moyal product with symmetric part of noncommutativity tensor'' was also considered. This approach refers to the geometrization of a variant of Moyal product with noncommutativity tensor given by $\theta^{ij}=\omega^{ij}+g^{ij}$, where $\omega^{ij}$ is symplectic, while $g^{ij}$ is symmetric and can be given the meaning of a metric. Two sub-cases was considered then -- the one with symplectic connection preserving symmetric part of noncommutativity tensor (i.e. metric), and the other, without such requirement. The first option is not suitable for the present purposes as it was clarified in the remark \ref{strongrmrk}. The second one leads to the formula for the trace functional, which, up to the first power of $\chi$, reads
\begin{equation}
\tr_* A = \int_\sympman \Tr \Bigg( 
A +  \frac{\rmi \chi}{2}\left( \curvbund_{ab} \Lambda^{ab} +  \frac{1}{2}\connsymp_b \connsymp_a g^{a b}\right) A +O(\chi^2) \Bigg) \frac{\omega^n}{n!}
\end{equation}
where $\curvbund_{ab}$ stands for the endomorphism defined by the curvature of the connection in $\bund$, and in our case is defined by the Riemann tensor. If we substitute $\underline{R}$ for $A$, then the term $\Tr\big(\curvbund_{ab} \underline{R} \Lambda^{ab} \big)=\tensor{R}{^i_j_a_b}\tensor{R}{^j_i} \Lambda^{ab}$ would vanish, exactly in the same way as in the generic case of (\ref{ncactn}). However, the term $(\connsymp_b \connsymp_a g^{a b})R$ introduces an imaginary quantity to the action integral, which is not cancelled by the weak compatibility condition. Indeed, under assumption $\tensor{Q}{^i_i_j}=0$, one can infer that $\connsymp_b \connsymp_a g^{a b}=-\connsymp_b (\tensor{Q}{^b_a_c}{g^{ac}})$, and the imaginary term still survives. This observation  prevents us from studying such theory here, although it also guides to considering additional compatiblilty condition $\tensor{Q}{^b_a_c}{g^{ac}}=0$ which is beyond the scope of the present paper.

\section{Comments}
\label{sect4}
The construction presented in this paper opens some possibilities for further research. First, it introduces nontrivial, yet apparently quite effective way of relating Fedosov and metric structures, by what we have called ``weak compatibilty condition''. It could be interesting to systematically summarize geometric consequences of this relation. 

Also, the new action principle for classical general relativity given by (\ref{graction}) and constraints (\ref{cnstr}) seems to be interesting on its own. In particular, it could be useful to analyze canonical formalism corresponding to it. Indeed, in a specific sense,  (\ref{graction}) and (\ref{cnstr}) are simpler then conventional Einstein-Hilbert action, as they do not contain any square root of determinant of the metric, and depend polynomially on fields and their inverses.

On the other hand, the weak compatibility is neither the only nor the canonical method for establishing dependence between metric and symplectic structures. It was already mentioned that some additional requirements can be analyzed within the framework of generalized Fedosov algebras. Furthermore, there exist other approaches that can be investigated. One possible example is given by \cite{yang}, where $g$ and $\omega$ are  related by imposing much stringent condition, formulated as the relation between metric and symplectic frame fields.

In the noncommutative case, although it was possible to remove Lagrange multipliers from the field equations, it remains unclear to what extent the theory determines symplectic connection and symplectic form. However, perturbative analysis (i.e. systematic expansion of fields in powers of $\chi$), analogous to that in \cite{dobrski-ncgr}, could become helpful here. Such considerations can be furthermore combined with inclusion of some (noncommutative) couplings to matter.

The related problem is the coupling to fermions  which is usually achieved in terms of frame (tetrad) fields. The starting point for such considerations could be section IV of \cite{dobrski-ncgr}, where a deformation of Palatini action by means of Fedosov construction (with fixed background of symplectic data) was considered.

Finally, let us mention two directions of research which are not inherently related to the present work, but have some area of common ground, and after some exploration could become strictly relevant to the presented model. The first one is given by the Ferraris-Francaviglia action principle for general relativity \cite{ferrfran}, which employs arbitrary auxiliary connection to bring the Einstein-Hilbert action to the manifestly and covariantly first-order form. This suggests, that symplectic connection can be used in this place. The other one is provided by the recent research \cite{asorey1,asorey2} on theories with symplectic connection coupled to the metric. Although it is driven by rather different paradigm then ours, it provides needful insight into quantum properties of such models.

\section{Acknowledgments}
I am very grateful to professor Maciej Przanowski for numerous discussions related to the present paper.
This work was supported by the Polish Ministry of Science and Higher Education grant no.~IP20120220072 within Iuventus Plus programme.

\section*{Appendix}
Here we give explicit, up to the second power of deformation parameter $\chi$, formula for auxiliary action integral of noncommutative theory. One can break it into three parts $\widehat{S}_{aux}=S+S_c+S_L$ where $S$ is action (\ref{graction}) without matter term, $S_c$ corresponds to noncommutative corrections and $S_L$ carries constraints. Using this split the respective terms can be inferred from the following expression
\begin{multline}
\widehat{S}_{aux}=S+S_c+S_L=\frac{1}{16 \pi G} \int_{\sympman} R \frac{\omega^n}{n!}  \\ 
+\frac{1}{16 \pi G} \int_{\sympman} \Big( - \frac{3}{8} \chi^2  \tensor{R}{^i_j} \tensor{R}{^j_{kmn}} \tensor{R}{^k_{irs}} \Lambda^{[mn}\Lambda^{rs]} 
+ 
 \chi^2 s_2 R  +  O(\chi^3) \Big) \frac{\omega^n}{n!} 
\\
+ \int_\sympman \left( \slgr^{ijk}\connsymp_i \omega_{jk} + \tensor{\clgr}{_i^{jk}} \tensor{\sGamma}{^i_{[jk]}} + \wlgr^i (\tensor{\Gamma}{^j_{ji}} -\tensor{\sGamma}{^j_{ji}}) \right) \frac{\omega^n}{n!}
\end{multline}
with
\begin{equation}
s_2=\frac{1}{64} \Lambda^{[a b} \Lambda^{c d]}\tensor{\curvsymp}{^k_{lab}}\tensor{\curvsymp}{^l_{kcd}}  + \frac{1}{48}\Lambda^{ab}\Lambda^{cd}\connsymp_e \connsymp_a \tensor{\curvsymp}{^e_{bcd}}
\end{equation}
where $\tensor{\curvsymp}{^a_{bcd}}$ is the curvature tensor of the symplectic connection $\connsymp$. The variation of these integrals with respect to $g_{ab}$, $\tensor{\sGamma}{^a_{bc}}$ and $\omega_{ab}$ yields (\ref{ncfldeq}) with

\begin{multline}
W^{ab} =-  \frac{\chi^2}{16 \pi G} \Bigg(\tensor{R}{^a^b} s_2 + \frac{3}{16} \tensor{R}{^b_c} \tensor{R}{^a^d_l_m} \tensor{R}{^c_d_n_r} \Lambda^{[l m}\Lambda^{n r]} + \frac{3}{16} \tensor{R}{^a_c} \tensor{R}{^b^d_l_m} \tensor{R}{^c_d_n_r} \Lambda^{[l m}\Lambda^{n r]} 
\\
- \tensor{g}{^a^c} \tensor{g}{^b^d} \nabla_d \nabla_c s_2 + \tensor{g}{^a^b} \tensor{g}{^c^d} \nabla_d \nabla_c s_2 + \frac{3}{16} \tensor{g}{^c^d} \nabla_d \nabla_c \tensor{R}{^a^l_m_n} \tensor{R}{^b_l_r_s} \Lambda^{[m n}\Lambda^{r s]} + \frac{3}{8} \nabla_d \nabla_n \tensor{R}{^c^d} \tensor{R}{^b_c_l_m} \Lambda^{[a l}\Lambda^{m n]} 
\\
- \frac{3}{8} \nabla_d \nabla_n \tensor{R}{^b^c} \tensor{R}{^d_c_l_m} \Lambda^{[a l}\Lambda^{m n]} + \frac{3}{8} \nabla_d \nabla_n \tensor{R}{^c^d} \tensor{R}{^a_c_l_m} \Lambda^{[b l}\Lambda^{m n]} - \frac{3}{8} \nabla_d \nabla_n \tensor{R}{^a^c} \tensor{R}{^d_c_l_m} \Lambda^{[b l}\Lambda^{m n]} 
\\
+ \frac{3}{32} \tensor{g}{^b^c} \tensor{g}{^d^l} \nabla_l \nabla_c \tensor{R}{^a_m_n_r} \tensor{R}{^m_d_s_t} \Lambda^{[n r}\Lambda^{s t]} + \frac{3}{32} \tensor{g}{^a^c} \tensor{g}{^d^l} \nabla_l \nabla_c \tensor{R}{^b_m_n_r} \tensor{R}{^m_d_s_t} \Lambda^{[n r}\Lambda^{s t]} 
\\
- \frac{3}{32} \tensor{g}{^b^c} \nabla_n \nabla_c \tensor{R}{^a^d_l_m} \tensor{R}{^n_d_r_s} \Lambda^{[l m}\Lambda^{r s]} 
- \frac{3}{16} \tensor{g}{^a^b} \nabla_n \nabla_d \tensor{R}{^c^d_l_m} \tensor{R}{^n_c_r_s} \Lambda^{[l m}\Lambda^{r s]}
 \\
+ \frac{3}{32} \tensor{g}{^a^c} \tensor{g}{^b^d} \nabla_r \nabla_c \tensor{R}{^l_d_m_n} \tensor{R}{^r_l_s_t} \Lambda^{[m n}\Lambda^{s t]}\Bigg)+O(\chi^3)
\end{multline}

\begin{multline}
\tensor{V}{_a^b^c} =  - \frac{\chi^2}{16 \pi G} \Bigg(\frac{1}{24}\tensor{\symped{R}}{^d_l_a_m} \Lambda^{b m} \Lambda^{c l} \connsymp_d R - \frac{1}{48}\tensor{\symped{R}}{^d_a_l_m} \Lambda^{b c} \Lambda^{l m} \connsymp_d R 
\\
- \frac{1}{48}\tensor{\delta}{_a^c} R \Lambda^{d l} \Lambda^{m n} \connsymp_l \tensor{\symped{R}}{^b_d_m_n} - \frac{1}{24}\Lambda^{b d} \Lambda^{c l} \connsymp_l \connsymp_d \connsymp_a R + \Lambda^{c d} \Lambda^{l m} \Bigg(-\frac{1}{48}\tensor{\symped{R}}{^b_d_l_m} \connsymp_a R 
\\
+ \frac{1}{48}\connsymp_d \Big( R \tensor{\symped{R}}{^b_a_l_m} \Big)+ \frac{1}{24} \connsymp_m \Big( R \tensor{\symped{R}}{^b_a_d_l}\Big)\Bigg) 
+ \Lambda^{b d} \Lambda^{l m} \Bigg(-\frac{1}{48}R \connsymp_a \tensor{\symped{R}}{^c_d_l_m} + \frac{1}{48}R \connsymp_d \tensor{\symped{R}}{^c_a_l_m}
\\
 + \frac{1}{24}R \connsymp_m \tensor{\symped{R}}{^c_l_a_d}\Bigg) \Bigg) +O(\chi^3)
\end{multline}

\begin{multline}
U^{ab}= \frac{\chi^2}{16 \pi G} \Bigg(\Bigg(\frac{15}{16}\tensor{R}{^c_d} \tensor{R}{^d_s_m_r} \tensor{R}{^s_c_l_n} + \frac{5}{128}R \tensor{\symped{R}}{^c_d_l_m} \tensor{\symped{R}}{^d_c_n_r}\Bigg) \Lambda^{[l n}\Lambda^{m r}\Lambda^{a b]} 
\\
+ \frac{1}{32} R \connsymp_n \connsymp_d \tensor{\symped{R}}{^n_c_l_m} \Lambda^{[a b}\Lambda^{c d]} \Lambda^{l m}  - \frac{1}{48} R \connsymp_n \connsymp_m \tensor{\symped{R}}{^n_l_c_d} \Lambda^{a c} \Lambda^{b d} \Lambda^{l m} \Bigg)+O(\chi^3)
\end{multline}
The calculations leading to above formulas were performed with the help of \texttt{xAct} tensor manipulation package \cite{xact}.

\end{document}